\newtheorem{theorem}{Theorem}[section]
\newtheorem{proposition}[theorem]{Proposition}
\theoremstyle{remark}
\theoremstyle{definition}
\newtheorem{definition}[theorem]{Definition}
\begin{document}

\title{Eigenfunctions of the Edge-Based Laplacian on a Graph}
\author{Richard C. Wilson\footnote{Corresponding author: Richard.Wilson@york.ac.uk},
 Furqan Aziz, Edwin R. Hancock\footnote{Edwin Hancock is supported by the Royal Society under a Wolfson Research Merit Award.}\\
Dept. of Computer Science\\University of York, York, UK}
\maketitle

\begin{abstract}
In this paper, we analyze the eigenfunctions of the edge-based Laplacian on a graph and
the relationship of these functions to random walks on the graph. We commence
by discussing the set of eigenfunctions supported at the vertices, and demonstrate
the relationship of these eigenfunctions to the classical random walk on the
graph. Then, from an analysis of functions supported only on the interior of
edges, we develop a method for explicitly calculating the edge-interior eigenfunctions of
the edge-based Laplacian. This reveals a connection between the edge-based Laplacian
and the adjacency matrix of backtrackless random walk on the graph. The edge-based eigenfunctions therefore correspond to some eigenfunctions of the normalised Hashimoto matrix. 

{\bf Keywords:}
graph eigenfunctions, edge-based Laplacian, backtrackless random walk, graph calculus

MSC[2010] 05C50, 05C81
\end{abstract}


\section{Introduction}
The traditional discrete graph Laplacian operator\cite{Chung:spectral} has proved to be a useful
tool in the analysis of graphs and has found application in a number of areas. For example,
the heat kernel, which is derived from the graph Laplacian, has been
used to define graph kernels\cite{KonLaf02,Lafferty:diffusion} in the
machine learning literature. Sun et al\cite{sun} used the heat kernel on the mesh
representing a 3D shape to create a heat kernel signature for describing shape.
In \cite{Aubry}, Aubry et al used wave-like solutions of Schr\"odinger's equation
to construct an alternative shape descriptor, referred to as the wave kernel signature.
The graph Laplacian was used by Coifman and
Lafon\cite{Coifman2006} for dimensionality reduction of data.
There are many other applications graph Laplacian in the literature.

In \cite{Friedman:somegeometric,FriedmanTillich:calculus}, Friedman and Tillich developed a calculus on graphs which
provides strong connections between graph theory and analysis. This approach has
a number of advantages. It allows the application of many results from analysis
directly to the graph domain, and opens up the use of  many new partial differential
equations on graphs. As an example, they define a wave equation\cite{wave} which
has a finite speed of propagation, in contrast to the usual wave equation on a graph.

In the graph calculus of Friedman and Tillich, the graph is given a geometric realization
by associating an interval with each edge of the graph. Functions may therefore
exist both at the vertices and on the interior of edges.
From this starting point
they develop a divergence and, most importantly, a graph Laplacian. 
This type of Laplacian has found application in the physics literature
where the interpretation is as the limiting case of a ``quantum wire''
\cite{quantumwires,rubin,Kuchment2001}.

The
graph Laplacian consists of two parts; namely a vertex-based Laplacian and an
edge-based Laplacian. Friedman and Tillich also demonstrate that for edgewise-linear functions
the edge-based Laplacian is zero and the graph Laplacian reduces to the traditional
discrete graph Laplacian. On the other hand, for functions where the vertex-based
Laplacian is zero, they obtain the edge-based Laplacian only. This results in
a setting which is substantially
different from the traditional approach. In the remainder of this paper, we will concern
ourselves with the edge-based Laplacian.

While Friedman and Tillich find the eigenvalues of the edge-based Laplacian,
and give some of its eigenfunctions explicitly, they do not give a method
for computing the entire eigensystem of the graph. In this paper, we demonstrate the
relationship between the simple eigenfunctions and the random walk on 
the graph. We then give a method for explicit calculation of the remaining
eigenfunctions. This analysis reveals a link between the remaining eigenfunctions and
the backtrackless random walk on the graph, which in turn is linked to
some properties of the quantum walk on a graph\cite{emms} and its Ihara zeta function\cite{quantum}.
Because of this link, the edge-based eigenfunctions correspond naturally to eigenfunctions
of the normalised adjacency matrix of the oriented line graph (the normalised Hashimoto matrix\cite{Hashimoto}).

The remainder of this paper is organized as follows. In Section 2, we briefly
introduce the formalism of Friedman and Tillich. In Section 3, we detail
the calculation of the vertex-supported eigenfunctions of the
edge-based Laplacian and describe their relationship to the random walk.
Section 4 contains our main result; we provide a method for calculating the remaining eigenfunctions,
an approach which is linked to the backtrackless random walk. Together with
the eigenfunctions detailed in Section 3, this completely describes the eigensystem
in terms of finite matrices on the graph. Finally
in Section 5 we give some conclusions and suggest future directions of research.

\section{The Edge-based Laplacian}
In this section, we give brief details of the formalism of Friedman and 
Tillich\cite{FriedmanTillich:calculus}.
Let $G=(V,E)$ be a graph with a geometric realization ${\cal G}$.
The geometric realization is the metric space consisting of the
vertices $V$ and a closed interval of length $l_e$ associated with
each edge $e\in E$.
The graph has a (possibly empty) boundary
set $\partial G\subset G$. We assume that the boundary is separated,
i.e. that each boundary vertex is incident on only one edge.
The graph {\it \r{G}}=\{{\it \r{V}},{\it \r{E}}\} 
excludes boundary vertices and any incident edges {\it \r{G}}$=G\backslash \partial G$.
We associate an edge variable with each edge. Let $e=(u,v)$
be an edge with interval variable $x_e$. The edge variable $x_e$ equals zero where the edge
meets vertex $u$  and equals one at vertex $v$. The start and end vertices are determined
by assigning an arbitrary orientation to each edge.

\begin{definition}[Friedman and Tillich\cite{FriedmanTillich:calculus}]
A {\it vertex measure}, $\cal V$ is a measure supported on the vertices
with ${\cal V}(v)>0$ for all $v\in V$. For our purposes it suffices to take
${\cal V}(v)=1$ for all $v\in V$. An {\it edge measure}, $\cal E$ is a measure supported on the interior of edges. ${\cal E}(v)=0$ for all $v\in V$ and the restriction
to the interior of edges is the Lebesgue measure.
\end{definition}

Let $f$ be a function defined on the graph (on both edges and vertices).
We take $f(u)$ to mean the value of $f$ at vertex $u$ and $f(e,x_e)$ to mean
the value of $f$ at position $x_e$ along edge $e$.
Since we have different volume measures on the edges and vertices, we must take care 
in dealing with integrating factors since they are different on edges and vertices. We use
$d\cal{V}$ for the vertex integration factor and $d\cal{E}$ for the edge integration factor. As a result,
we have a two-part Laplacian:
\begin{equation}
\Delta f=\Delta_V f d{\cal V} +\Delta_E f d{\cal E},
\end{equation}
where $\Delta_V$ is the {\it vertex-based Laplacian} and $\Delta_E$ is the
{\it edge-based Laplacian}. Since graph Laplacians are usually given as
positive definite operators, the edge-based Laplacian is minus the
usual calculus Laplacian
\begin{equation}
\Delta_E f=-\nabla_\textrm{calc} \cdot \nabla f.
\end{equation}

The vertex-based Laplacian turns out to be
\begin{equation}
\Delta_V f=\frac{1}{{\cal V}(v)}\sum_{e\ni v}
{\bf n}_{e,v}\cdot {\nabla f}|_e(v).
\end{equation}
Here ${\bf n}_{e,v}$ is the {\it outward-pointing} unit normal. In other
words, for an edge $(a,b)$ it points from $a$ to $b$ at the vertex $b$, and
from $b$ to $a$ at the vertex $a$.

\begin{definition}
A function is said to be  {\it edge-based} if $\Delta_V f=0$. For edge-based
functions, the Laplacian consists of only the edge-based part, $\Delta f=\Delta_E f d{\cal E}$
\end{definition}

For a function $f$ to be edge-based, the following condition applies.
\begin{equation}
\sum_{e\ni v}
(-1)^{1-x_{e,v}}\nabla f(e,x_{e,v})=0\:\forall v,
\label{edgecondition}
\end{equation}
or in other words, the sum of the outward-pointing gradients must be zero.

\section{Vertex-supported Edge-based Eigenfunctions}
For the remainder of this paper, we assume that the edge lengths
on the graph are equal. Friedman and Tillich\cite{wave} demonstrate the
connection between the eigenfunctions of the edge-based Laplacian and
the eigenvectors of the row-normalised adjacency matrix. This
follows directly from the observation that $\Delta_E$ is essentially
the familiar Laplacian of calculus and therefore admits eigenvectors of
the form $f(e,x_e)=C(e)\cos(\omega x_e+B(e))$ where $\omega$ is the
frequency of the eigenfunction, corresponding to an eigenvalue of $\omega^2$.
The eigenfunction is edge-based and so applying condition (\ref{edgecondition})
to the eigenfunction gives
\begin{equation}
\sum_{e\ni v} \frac{ f(u)-f(v)\cos\omega}{\sin\omega}=0.
\label{functioncondition}
\end{equation}
for any $\{\omega,f\}$ for $\Delta_E$, when $\omega$ is not
a multiple of $\pi$. We call $\omega$ the {\it frequency} of the eigenfunction 
and $\{\omega,f\}$ an {\it eigenpair} while noting that the corresponding 
eigenvalue is $\omega^2$.

\begin{definition}
A {\it principal eigenpair} is an eigenpair of $\Delta_E$
with $0\leq \omega \leq 2\pi$ where $\omega$ is the smallest magnitude frequency 
of a sequence of eigenpairs
of the form $\{\omega+2n\pi, C(e)\cos[(\omega+2n\pi) x_e+B(e)]\}$, $n\in\mathbb{N}$
with the same coefficients  $B(e)$ and $C(e)$.
\end{definition}

Let the vector ${\bf g}\neq 0$ be the restriction of an eigenfunction $f$ to
the vertices, taken in some particular order, i.e. $g_u=f(u)$. All eigenfunctions in the same sequence
as $f$ have a vertex restriction equal to $\bf g$.

The row-normalised adjacency matrix $\tilde{\bf A}$ for interior vertices
$v\in${\it \r{V}} is given by
\begin{equation}
\tilde{A}_{ij}=\frac{A_{ij}}{\sum_j A_{ij}},
\end{equation}
where $\bf A$ is the usual graph adjacency matrix.

\begin{theorem}[Friedman and Tillich\cite{wave}] Let $\cal G$ be the geometric realization of a
graph and let $\tilde{\bf A}$ be its row-normalized adjacency matrix. Each
eigenvalue $\lambda$ of $\tilde{\bf A}$, with $\lambda\notin\{-1,1\}$
corresponds to two principal frequencies of $\cal G$
\begin{equation}
\begin{array}{ccc}
\cos^{-1}\lambda & \textrm{\it and} & 2\pi-\cos^{-1}\lambda,
\end{array}
\end{equation}
each with the same multiplicity as $\lambda$.
The corresponding eigenvector $\bf g$ is the vertex restriction of the 
sequence of eigenfunctions based on the principal eigenvalue $\omega$.
\end{theorem}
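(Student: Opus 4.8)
The plan is to reduce everything to condition (\ref{functioncondition}) and to recognise that condition, after clearing the factor $\sin\omega$, as an eigenvalue equation for $\tilde{\bf A}$. On the interior of each edge $e$ the equation $\Delta_E f=\omega^2 f$ is $-f''=\omega^2 f$, so every eigenfunction of $\Delta_E$ of frequency $\omega$ has the form $f(e,x_e)=C(e)\cos(\omega x_e+B(e))$ there. Requiring $f$ to be single-valued at the vertices gives, for an edge $e=(u,v)$, the two relations $C(e)\cos B(e)=f(u)$ and $C(e)\cos(\omega+B(e))=f(v)$; solving the second for $C(e)\sin B(e)$ yields $C(e)\sin B(e)=(f(u)\cos\omega-f(v))/\sin\omega$, which is legitimate precisely because $\omega$ is not a multiple of $\pi$, so $\sin\omega\neq 0$. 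Substituting the resulting outward gradients into (\ref{edgecondition}) is exactly the computation that produced (\ref{functioncondition}); multiplying (\ref{functioncondition}) through by $\sin\omega$ gives, at each interior vertex $v$, the identity $\sum_{u\sim v}f(u)=\deg(v)\,f(v)\cos\omega$, that is, $\tilde{\bf A}\,{\bf g}=(\cos\omega)\,{\bf g}$ with $g_v=f(v)$.

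For the forward direction, let $\{\omega,f\}$ be an eigenpair whose principal frequency $\omega$ is none of $0,\pi,2\pi$, equivalently with $\lambda:=\cos\omega\notin\{-1,1\}$. The previous paragraph shows $\tilde{\bf A}\,{\bf g}=\lambda{\bf g}$, so only ${\bf g}\neq 0$ remains to be checked. If ${\bf g}=0$ then on every edge $C(e)\cos B(e)=C(e)\cos(\omega+B(e))=0$; since $\sin\omega\neq 0$ this forces $C(e)\sin B(e)=0$ as well, hence $C(e)=0$ and $f\equiv0$, contradicting that $f$ is an eigenfunction. Thus $\lambda$ is an eigenvalue of $\tilde{\bf A}$, and its eigenvector ${\bf g}$ is the common vertex restriction of the whole sequence containing $f$.

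For the converse, take $\lambda\notin\{-1,1\}$ an eigenvalue of $\tilde{\bf A}$ with eigenvector ${\bf g}\neq 0$, and set $\omega_0=\cos^{-1}\lambda\in(0,\pi)$ and $\omega_0'=2\pi-\omega_0\in(\pi,2\pi)$; both lie in $[0,2\pi]$, have cosine $\lambda$, and have non-zero sine. Fix $\omega\in\{\omega_0,\omega_0'\}$. For each $n\in\mathbb{N}$, replacing $\omega$ by $\omega+2n\pi$ leaves $\cos\omega$ and $\sin\omega$ unchanged, so on each edge $e=(u,v)$ the system $C(e)\cos B(e)=g_u$, $C(e)\sin B(e)=(g_u\cos\omega-g_v)/\sin\omega$ has a unique solution $(C(e),B(e))$ independent of $n$. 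The function $f(e,x_e)=C(e)\cos[(\omega+2n\pi)x_e+B(e)]$ is then continuous on $\cal G$ with $f(v)=g_v$, satisfies $-f''=(\omega+2n\pi)^2 f$ on edge interiors, and — reversing the first paragraph — satisfies (\ref{edgecondition}) because $\tilde{\bf A}{\bf g}=\lambda{\bf g}$. Hence each $\{\omega+2n\pi,f\}$ is an eigenpair, the family shares the coefficients $B(e),C(e)$, and $\omega\ (=\omega_0$ or $\omega_0')$ is therefore a principal frequency whose sequence of eigenfunctions restricts at the vertices to ${\bf g}$.

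Finally, for the multiplicity count, fix $\omega\in\{\omega_0,\omega_0'\}$ and consider the linear map that sends an eigenpair of $\Delta_E$ with this frequency to its vertex restriction ${\bf g}$. By the forward direction its image lies in $\ker(\tilde{\bf A}-\lambda I)$; by the converse it maps onto that eigenspace; and the argument that ${\bf g}=0$ forces $f\equiv0$ shows it is injective. Hence the number of independent eigenpairs of frequency $\omega$ equals $\dim\ker(\tilde{\bf A}-\lambda I)$, the multiplicity of $\lambda$, and the same holds for the other frequency $2\pi-\cos^{-1}\lambda$. I expect this last bijection to be the main point requiring care: one must be sure that at a frequency which is not a multiple of $\pi$ no eigenfunction can hide with all vertex values zero, and that the passage between eigenvectors of $\tilde{\bf A}$ and the coefficient data $(C(e),B(e))$ is both well defined and reversible — each of these hinging on $\sin\omega\neq 0$ — alongside the bookkeeping that keeps the two frequencies $\cos^{-1}\lambda$ and $2\pi-\cos^{-1}\lambda$ and their respective $n$-indexed sequences distinct.
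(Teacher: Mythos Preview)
Your argument is correct and follows exactly the route the paper sketches: the paper sets up the cosine form of the eigenfunctions on each edge and derives condition~(\ref{functioncondition}), and you complete the argument by clearing $\sin\omega$ to recognise $\tilde{\bf A}{\bf g}=(\cos\omega){\bf g}$, then handle both directions and the multiplicity via the vertex-restriction map. The paper itself does not give a self-contained proof of this theorem but cites Friedman and Tillich; your write-up is essentially the natural filling-in of that citation, with the injectivity step (${\bf g}=0\Rightarrow f\equiv0$ when $\sin\omega\neq0$) and the explicit construction of $(C(e),B(e))$ matching equations~(\ref{ccond})--(\ref{bcond}) in the paper.
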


Equation (\ref{functioncondition})
allows us to determine the eigenfunctions. For a
principal frequency $\omega$ the principal eigenfunction 
is $f(e,x_e)=C(e)\cos(B(e)+\omega x_e)$ with
\begin{eqnarray}
C(e)^2&=&\frac{g_v^2+g_u^2-2g_u g_v\cos\omega}{\sin^2\omega}, \label{ccond}\\
\tan B(e)&=&\frac{g_v\cos\omega-g_u}{g_v\sin\omega}. \label{bcond}
\end{eqnarray}
There are two solutions to Equations (\ref{ccond}) and (\ref{bcond}) which are $\{C(e),B_0(e)\}$ or $\{-C(e),B_0(e)+\pi\}$ but both give the
same eigenfunction. The sign of $C(e)$ must be chosen correctly to match the phase, i.e.
so that $C(e)\cos(B(e))=g_u$. Since $B$ and $C$ are uniquely determined, this comprises all
the eigenvectors of this form.

Friedman and Tillich give a pair of frequencies as above for each eigenvalue of
the row-normalized adjacency matrix, but we note that the eigenpair with $\omega=2\pi-\cos^{-1}\lambda$
actually corresponds to the same sequence of eigenpairs as those with $\omega=\cos^{-1}\lambda$, but with negative values of $n$. There is therefore
a single sequence for each eigenvalue with
\begin{eqnarray}
\omega&=&\cos^{-1}\lambda, \\
f(e,x_e)&=&C(e)\cos\left[B(e)+(\omega+2\pi n) x_e\right],n\in\mathbb{Z}. 
\end{eqnarray}

The value $\lambda=1$ is always an eigenvalue of $\tilde{\bf A}$.
This corresponds to a principal frequency of $\omega=0$ in $\Delta_E$ and therefore
the corresponding eigenfunction is $f(e,x_e)=C(e)\cos(B(e))$ which
is constant on the vertices. If $\partial G\neq \emptyset$ then
this eigenfunction must be zero everywhere. If $\partial G= \emptyset$ then we
obtain a single principal eigenpair with $\omega=0$ and $f(e,x_e)=C$.

The value $\lambda=-1$ will be an eigenvalue of
 $\tilde{\bf A}$ if {\it \r{G}} is bipartite. If $\partial G\neq \emptyset$ then the eigenfunction must be zero on the vertices. We defer the evaluation of eigenfunctions
 not supported on $V$ for the next section. If $\partial G= \emptyset$
 then a single principal eigenpair exists with $\omega=\pi$ and $f(e,x_e)=C\cos(\pi x_e)$. The eigenfunction alternates in sign between the two partitions of
 the graph.
 
This comprises all principal eigenpairs which are supported on the vertices
($\bf g\neq 0$)\cite{wave}. The eigenfunctions supported on the vertices are
therefore directly determined by the eigensystem of $\tilde{\bf A}$.

\subsection{Random Walks and Line Graphs}

The line graph $\textrm{LG}(G)=(V_l,E_l)$ of a graph $G$ is a graph construction
in which we replace the edges of $G$ with the vertices of $\textrm{LG}(G)$ as follows. Firstly we create the symmetric digraph $\textrm{SDG}(G)$ of $G$ by replacing each undirected edge with a pair
of oriented edges. Each oriented edge of the SDG then becomes a vertex of  $\textrm{LG}(G)$.
These vertices are connected if the head of one oriented edge meets the tail
of another. The reverse pair of oriented edges are connected, i.e. $((u,v),(v,u))$ is
an edge in the LG.
$$V_l=\{(u,v)\in E(SDG)\},$$
$$E_l=\{((u,v),(v,w)),(u,v)\in E(SDG),(v,w)\in E(SDG)\}.$$

The {\it oriented line graph} $\textrm{OLG}(G)=(V_o,E_o)$ is constructed in the same way as the
LG except that reverse pairs of oriented edges are not connected, i.e. $((u,v),(v,u))$
is {\it not} an edge. The vertex and edge sets of $\textrm{OLG}(G)$ are therefore
$$V_o=\{(u,v)\in E(SDG)\},$$
$$E_o=\{((u,v),(v,w)),(u,v)\in E(SDG),(v,w)\in E(SDG):u\neq w\}.$$

\begin{figure}
\begin{tabular}{ccc}
\includegraphics[width=0.3\linewidth]{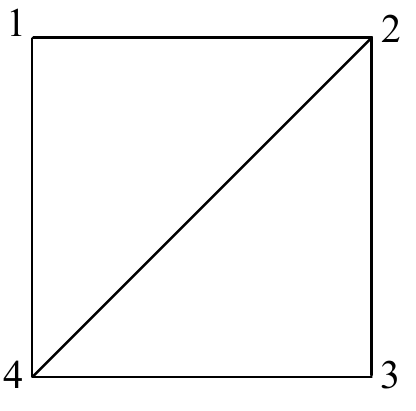} &
\includegraphics[width=0.3\linewidth]{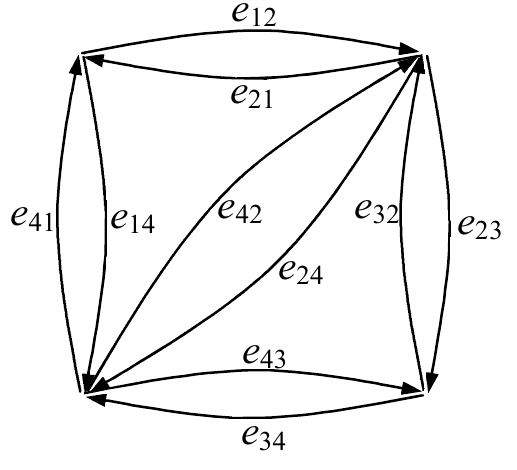} &
\includegraphics[width=0.3\linewidth]{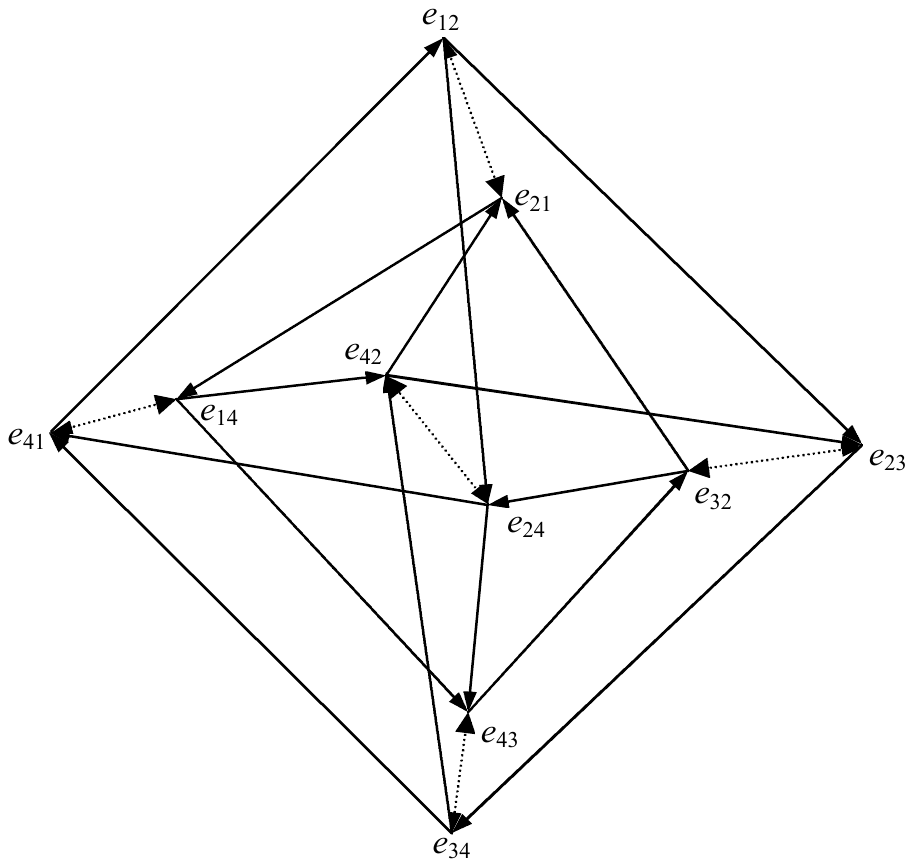}
\end{tabular}
\caption{Left: A graph. Middle: Its symmetric digraph. Right: The line graph and the
oriented line graph; the dotted edges exist in the line graph but not in the oriented line graph.}
\label{fig1}
\end{figure}

Figure \ref{fig1} illustrates these concepts.
A random walk on the vertices of LG represents the sequence of edges traversed in a random walk on the  
original graph $G$. Similarly, a random walk on the OLG represents the sequence of edges in a random walk on $G$ where backtracking steps are not allowed (a {\it backtrackless walk}).

\begin{proposition}
Let $\tilde{\bf A}$ be the row-normalised adjacency matrix of $G$ and $\tilde{\bf U}$ be
the row-normalised adjacency matrix of the line graph of $G$. Each eigenpair $\{\lambda,{\bf g}\}$
of $\tilde{\bf A}$ corresponds to an eigenpair $\{\mu,{\bf h}\}$ of  $\tilde{\bf U}$ with
\begin{eqnarray}
\mu&=&\lambda,\\
h_{uv}&=&A_{uv}g_v.
\end{eqnarray}
\end{proposition}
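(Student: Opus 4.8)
The plan is to verify the claimed correspondence by direct substitution, after unwinding the definition of $\tilde{\bf U}$. Recall that a vertex of $\mathrm{LG}(G)$ is an oriented edge $(u,v)$ with $A_{uv}=1$, and that the out-neighbours of $(u,v)$ in $\mathrm{LG}(G)$ are exactly the oriented edges $(v,w)$ with $A_{vw}=1$; there are $d_v:=\sum_w A_{vw}$ of them. Hence the row-normalised adjacency matrix of the line graph has entries $\tilde U_{(u,v),(v,w)}=A_{vw}/d_v$ and is zero between oriented edges that do not meet head-to-tail. The first step is simply to record this explicit form, together with the observation that, since $A_{uv}=1$ for every vertex $(u,v)$ of $\mathrm{LG}(G)$, the proposed vector satisfies $h_{uv}=A_{uv}g_v=g_v$; that is, $\bf h$ assigns to each oriented edge the value $\bf g$ takes at the edge's head.

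With this in hand, the second step is the computation $(\tilde{\bf U}{\bf h})_{(u,v)}=\sum_{w:\,A_{vw}=1}\tilde U_{(u,v),(v,w)}\,h_{vw}=\frac{1}{d_v}\sum_w A_{vw}g_w$, and the inner sum is precisely $d_v(\tilde{\bf A}{\bf g})_v=d_v\lambda g_v$. Therefore $(\tilde{\bf U}{\bf h})_{(u,v)}=\lambda g_v=\lambda h_{uv}$, which is the eigenvalue equation for $\tilde{\bf U}$ with eigenvalue $\mu=\lambda$. Finally one notes that $\bf h\neq 0$: since $\bf g\neq 0$ there is a vertex $v$ with $g_v\neq 0$, and (as $G$ has no isolated vertices) some oriented edge $(u,v)$ has head $v$, so $h_{uv}=g_v\neq 0$.

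There is no real obstacle here — the content is entirely in choosing the right ansatz $h_{uv}=g_v$ and in noticing that the row-normalisation of $\tilde{\bf U}$ divides by the degree of the head vertex $v$, which is exactly the denominator appearing in $(\tilde{\bf A}{\bf g})_v$. The only points requiring a little care are the bookkeeping of edge orientations in $\mathrm{SDG}(G)$ and $\mathrm{LG}(G)$, and the fact (inessential to the statement, but worth a remark) that the linear map ${\bf g}\mapsto{\bf h}$ is injective when $G$ has no isolated vertices, so that each eigenspace of $\tilde{\bf A}$ embeds into the corresponding eigenspace of $\tilde{\bf U}$; the remaining eigenvalues of $\tilde{\bf U}$, arising from the extra dimensions of $\mathbb{R}^{V_l}$ not of this form, are not described by this argument and motivate the backtrackless-walk analysis that follows.
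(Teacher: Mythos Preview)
Your proof is correct and follows essentially the same approach as the paper: both write down the entries of $\tilde{\bf U}$ explicitly and verify $\tilde{\bf U}{\bf h}=\lambda{\bf h}$ by direct computation, reducing the line-graph sum at $(u,v)$ to $(\tilde{\bf A}{\bf g})_v$. Your version is slightly more careful in that you also check ${\bf h}\neq 0$ and remark on the injectivity of ${\bf g}\mapsto{\bf h}$, points the paper leaves implicit.
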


\begin{proof}
We may write the row-normalized adjacency matrix of the LG as
\begin{equation}
\tilde{\bf U}_{uv,wx}=\frac{A_{uv}A_{wx}\delta_{vw}}{d_x}=A_{uv}\tilde{A}_{wx}\delta_{vw}
\end{equation}
We have
\begin{eqnarray}
\sum_{w,x} \tilde{U}_{uv,wx} h_{wx} & = &
\sum_{w,x} A_{uv}\tilde{A}_{wx}\delta_{vw}A_{wx}g_x, \\
&=&A_{uv} \sum_x \tilde{A}_{vx}g_x, \\
&=&\lambda A_{uv} g_v=\mu h_{uv}.
\end{eqnarray}
\end{proof}

The vertex-supported eigenfunctions of the edge-based Laplacian are therefore determined
by the structure of the random walk on the graph. As we shall show later, the remaining eigenfunctions
are determined by the structure of the backtrackless random walk.

\section{Edge-interior Eigenfunctions}
We now proceed to our main result. All remaining eigenfunctions of the edge-based Laplacian are zero
on the vertices of $\cal G$ and therefore must have a principal frequency
of $\omega\in\{\pi,2\pi\}$. We start with the case $\omega=\pi$.

\begin{proposition}
Principal eigenfunctions of the edge-based Laplacian with principal frequency $\omega=\pi$
and which are zero on the vertices of $\cal G$ are of the form $f(e,x_e)=C(e)\cos(\frac{\pi}{2}+\pi x_e)$, $e\in${\it \r{E}} with
\begin{equation}
\sum_{e\ni v}C(e)=0\:\forall v\in\textrm{\it \r{V}}.
\end{equation}
\end{proposition}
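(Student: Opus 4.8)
The plan is to start from the general form of an edge-based eigenfunction, namely $f(e,x_e)=C(e)\cos(B(e)+\omega x_e)$, specialise to $\omega=\pi$, and impose the two constraints that (i) $f$ vanishes at every interior vertex and (ii) $f$ satisfies the edge-based condition (\ref{edgecondition}). First I would use condition (i): at the endpoint $x_e=0$ of an edge $e=(u,v)$ with $u\in\r{V}$ we need $f(e,0)=C(e)\cos B(e)=0$, and at $x_e=1$ we need $f(e,1)=C(e)\cos(B(e)+\pi)=-C(e)\cos B(e)=0$. Assuming $C(e)\neq 0$ (otherwise the eigenfunction is trivial on that edge), this forces $\cos B(e)=0$, i.e. $B(e)=\tfrac{\pi}{2}+k\pi$. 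Absorbing the sign ambiguity into $C(e)$, we may take $B(e)=\tfrac{\pi}{2}$ uniformly, which yields exactly the stated form $f(e,x_e)=C(e)\cos(\tfrac{\pi}{2}+\pi x_e)=-C(e)\sin(\pi x_e)$.

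Next I would compute the outward-pointing gradients needed for condition (\ref{edgecondition}). With $f(e,x_e)=-C(e)\sin(\pi x_e)$ we get $\nabla f(e,x_e)=-\pi C(e)\cos(\pi x_e)$, so $\nabla f(e,0)=-\pi C(e)$ and $\nabla f(e,1)=\pi C(e)$. The outward normal at the tail ($x_{e,v}=0$) carries a factor $(-1)^{1-0}=-1$ and at the head ($x_{e,v}=1$) a factor $(-1)^{1-1}=+1$. So the contribution of edge $e$ to the sum at its tail vertex is $-(-\pi C(e))=\pi C(e)$, and at its head vertex it is $+(\pi C(e))=\pi C(e)$; in both cases the edge contributes $\pi C(e)$ to the vertex it is incident on. Hence condition (\ref{edgecondition}) at an interior vertex $v$ reads $\sum_{e\ni v}\pi C(e)=0$, which is equivalent to $\sum_{e\ni v}C(e)=0$. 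This establishes that every eigenfunction of the claimed type does satisfy the constraint, and conversely any choice of coefficients $C(e)$ obeying $\sum_{e\ni v}C(e)=0$ for all $v\in\r{V}$ produces a valid edge-based eigenfunction with frequency $\pi$ that vanishes on the vertices.

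The only subtlety — and the step I expect to need the most care — is the bookkeeping of orientations: an edge of $\r{E}$ that is incident on $v$ may have $v$ as its tail or its head, and one must check that the sign factor $(-1)^{1-x_{e,v}}$ combines with the corresponding endpoint gradient of $-C(e)\sin(\pi x_e)$ to give the same sign $+\pi C(e)$ in both cases, independently of the arbitrary orientation assigned to the edge. Since the argument above shows this cancellation is uniform, the orientation choices drop out and the condition is genuinely symmetric, i.e. $\sum_{e\ni v}C(e)=0$ with $C(e)$ unsigned. A final remark would note that boundary edges (those in $E\setminus\r{E}$) are automatically excluded because the eigenfunction is required to be supported on the interior graph $\r{G}$, and that we should also confirm $f$ restricted to the vertices is indeed the zero vector, which is immediate from $f(e,0)=f(e,1)=0$.
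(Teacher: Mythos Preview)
Your proof is correct and follows essentially the same approach as the paper: deduce $B(e)=\pi/2$ from the vanishing at both endpoints, compute the endpoint gradients of $C(e)\cos(\tfrac{\pi}{2}+\pi x_e)$, and check that the sign factor $(-1)^{1-x_{e,v}}$ in condition~(\ref{edgecondition}) gives the same contribution $\pi C(e)$ regardless of whether $v$ is the head or tail of $e$. You are in fact more explicit than the paper about this orientation bookkeeping, and the paper's only additional remark is to state up front (rather than at the end) that boundary separation forces the eigenfunction to vanish on boundary edges so one may restrict to \r{G}.
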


\begin{proof}
Since the boundary is separated, the eigenfunctions must be zero on any
boundary edge and any edge incident on a boundary vertex. As a result, we may
concern ourselves only with {\it \r{G}}. The eigenfunction $f(e,x_e)$ is zero at
both vertices incident on edge $e$, giving values for $B(e)$ of $B(e)\in\{\pi/2,3\pi/2\}$, which both
give the same eigenfunction (with a different sign for $C(e)$). We
may therefore take $B(e)=\pi/2$. The gradients at either end of the
edge are $\nabla f(e,x_e=0)=-\pi C(e)\sin(\frac{\pi}{2})$ and 
$\nabla f(e,x_e=1)=-\pi C(e)\sin(\frac{3\pi}{2})$. Applying
condition (\ref{edgecondition}) to this eigenfunction, we obtain
$$\sum_{e\ni v}C(e)=0.$$
\end{proof}

Hence in order to find eigenfunctions of this type, we must find a
set of coefficients attached to the edges which sum to zero at
every vertex.

\begin{proposition}
The principal eigenfunctions of the edge-based Laplacian with principal frequency $\omega=2\pi$ and which are zero on the vertices of $\cal G$ are of the form $f(e,x_e)=C(e)\cos(\frac{\pi}{2}+2\pi x_e)$, $e\in${\it \r{E}} with
\begin{equation}
\sum_{e\ni v}
(-1)^{1-x_{e,v}}C(e)=0\:\forall v\in\textrm{\it \r{V}}. 
\end{equation}
\end{proposition}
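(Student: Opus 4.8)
The plan is to mirror the proof of the previous proposition almost verbatim, changing only the frequency. First I would observe that, since the boundary is separated, any eigenfunction that is zero on the vertices must vanish identically on boundary edges and on edges incident to a boundary vertex, so it suffices to work on \r{G}. Next, writing the eigenfunction on an interior edge $e$ as $f(e,x_e) = C(e)\cos(B(e) + 2\pi x_e)$, the condition $f(e,0) = 0$ forces $\cos B(e) = 0$, i.e. $B(e) \in \{\pi/2, 3\pi/2\}$; and since $\cos(2\pi x_e)$ has period $1$, the condition $f(e,1) = 0$ is automatically satisfied, so both choices of $B(e)$ are consistent (they differ only by the sign of $C(e)$, giving the same function). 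We may therefore normalise $B(e) = \pi/2$, so $f(e,x_e) = C(e)\cos(\tfrac{\pi}{2} + 2\pi x_e) = -C(e)\sin(2\pi x_e)$.

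Then I would compute the outward-pointing gradients at the two ends of $e$. We have $\nabla f(e,x_e) = -2\pi C(e)\sin(\tfrac{\pi}{2} + 2\pi x_e) = -2\pi C(e)\cos(2\pi x_e)$, so $\nabla f(e, 0) = -2\pi C(e)$ and $\nabla f(e, 1) = -2\pi C(e)$ as well. The crucial point — and the one genuinely different from the $\omega = \pi$ case — is that here the two endpoint gradients are equal rather than opposite, so the sign factor $(-1)^{1-x_{e,v}}$ in the edge-based condition (\ref{edgecondition}) does not cancel. Plugging into (\ref{edgecondition}): at a vertex $v$, the term from an edge $e \ni v$ is $(-1)^{1-x_{e,v}}\nabla f(e, x_{e,v}) = (-1)^{1-x_{e,v}}(-2\pi C(e))$, and summing over $e \ni v$ and dividing through by the nonzero constant $-2\pi$ gives exactly $\sum_{e\ni v}(-1)^{1-x_{e,v}}C(e) = 0$ for all $v \in$ \r{V}.

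The only subtlety worth stating carefully is the meaning of $x_{e,v}$: for an edge $e$ with assigned orientation, $x_{e,v} = 0$ at its start vertex and $x_{e,v} = 1$ at its end vertex, so the factor $(-1)^{1-x_{e,v}}$ is $-1$ at the start and $+1$ at the end. Because the gradient took the same value at both ends (unlike in the $\omega=\pi$ case, where it was sign-alternating and hence already "orientation-adjusted"), this alternating factor survives into the final constraint. I do not expect any real obstacle here — the argument is a routine trigonometric specialisation of the preceding proof — but I would be careful to remark that the orientation of each edge is fixed once and for all, so the set of coefficients $\{C(e)\}$ satisfying the displayed condition is well defined relative to that orientation, and that reversing an edge's orientation flips the sign of its $C(e)$ in a consistent way.
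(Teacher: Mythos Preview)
Your proposal is correct and follows essentially the same route as the paper's own proof: fix $B(e)=\pi/2$, compute $\nabla f(e,0)=-2\pi C(e)\sin(\pi/2)$ and $\nabla f(e,1)=-2\pi C(e)\sin(5\pi/2)$, observe these are equal, and plug into condition~(\ref{edgecondition}) to obtain the displayed constraint. Your write-up is in fact more explicit than the paper's (which simply says ``the proof is essentially the same as for the previous proposition'' and records the two gradient values), and your remark that the sign factor $(-1)^{1-x_{e,v}}$ survives precisely because the endpoint gradients now agree rather than alternate is a helpful clarification that the paper leaves implicit.
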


\begin{proof}
The proof is essentially the same as for the previous proposition.
However the gradients at either end of the
edge are $\nabla f(e,x_e=0)=-2\pi C(e)\sin(\frac{\pi}{2})$ and 
$\nabla f(e,x_e=1)=-2\pi C(e)\sin(\frac{5\pi}{2})$. Applying
Condition (\ref{edgecondition}), we obtain
$$\sum_{e\ni v}(-1)^{1-x_{e,v}}C(e)=0.$$
\end{proof}

We may interpret this condition as follows: Consider each undirected edge of {\it \r{G}} as a pair of directed edges. Associate a value $C(e)$ with each directed edge
in the direction of increasing $x_e$ (i.e. from $x_e=0$ to $x_e=1$) and a value $-C(e)$
with the reverse edge. Then the condition above is that the sum of incoming directed
edges at a vertex must be zero.

We may write these conditions for both principal frequencies in the following form. Let $w_{uv}=\pm C(e)$ be the values associated with each edge, as described above. The sign is always positive for $\omega=\pi$ and alternates in sign depending on the direction
of edge traversal for $\omega=2\pi$. Then we may form a matrix $\bf W$ with elements $W_{uv}=w_{uv}A_{uv}$. The conditions for $\omega=\pi$ are then
\begin{eqnarray}
{\bf W}{\bf 1}&=&0,\\
{\bf W}&=&{\bf W}^T,
\end{eqnarray}
and for $\omega=2\pi$ they are
\begin{eqnarray}
{\bf W}{\bf 1}&=&0,\\
{\bf W}&=&-{\bf W}^T,
\end{eqnarray}
where $\bf 1$ is the vector of all-ones.

The oriented line graph of $G$ ($\textrm{OLG}(G)$) represents the structure
of a backtrackless random walk on $G$ in the sense that a random walk
on the vertices of $\textrm{OLG}(G)$ generate a sequence of edges which
can be traversed in a backtrackless random walk on $G$.
In \cite{quantum} we demonstrated that the adjacency matrix
of the oriented line graph ($\bf T$) is equal to the positive support of a quantum
walk on the graph $G$. This matrix is also related to the Ihara zeta function of the graph 
since it is equal to the Perron-Frobenius operator on  $\textrm{OLG}(G)$ and therefore
can be used to calculate the Ihara zeta function $Z_G(u)$ using $Z_G^{-1}(u)=\textrm{det}({\bf I}-u{\bf T})$. We now show that the eigenvectors of $\bf T$ corresponding to eigenvalues of $\lambda=\pm 1$ 
determine the structure of the edge-interior eigenfunctions.

\begin{theorem}
Let $\bf T$ be the adjacency matrix of $\textrm{OLG}(G)$ and $\bf s$ be
an eigenvector of $\bf T$ with eigenvalue $\lambda=1$. Then $s_{uv}=-s_{vu}$ and
$\sum_u s_{uv}=0$, and $\bf s$ provides a solution for ${\bf W}$ in the case of $\omega=2\pi$.
Similarly, if $\lambda=-1$ then  $s_{uv}=s_{vu}$ and
$\sum_u s_{uv}=0$, and $\bf s$ provides a solution for ${\bf W}$ in the case of $\omega=\pi$.
\end{theorem}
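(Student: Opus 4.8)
The plan is to translate the matrix identity ${\bf T}{\bf s}=\lambda{\bf s}$ into the two stated structural properties of ${\bf s}$, and then read off ${\bf W}$ directly. Write the eigenvalue equation at the $\textrm{OLG}(G)$-vertex $(u,v)$: its out-neighbours are exactly the vertices $(v,w)$ with $A_{vw}=1$ and $w\neq u$, so (using $A_{uv}=A_{vu}=1$) the equation reads $\sum_w A_{vw}s_{vw}-s_{vu}=\lambda s_{uv}$. Introducing the outgoing sum $\sigma_v=\sum_w A_{vw}s_{vw}$, this becomes $\sigma_v-s_{vu}=\lambda s_{uv}$ for every directed edge $(u,v)$.

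For $\lambda=1$ we get $s_{uv}+s_{vu}=\sigma_v$; applying the same identity to the reversed edge gives $s_{vu}+s_{uv}=\sigma_u$, so $\sigma_u=\sigma_v$ whenever $A_{uv}=1$, and therefore $\sigma$ is constant on each connected component. The crux is to show this constant vanishes: summing the eigenvalue equation over all directed edges, and using $\sum_{(u,v)}s_{uv}=\sum_{(u,v)}s_{vu}=\sum_v\sigma_v$ together with $\sum_{(u,v)}\sigma_v=\sum_v d_v\sigma_v$, the sum collapses to $\sum_v(d_v-2)\sigma_v=0$, i.e. $2(|E|-|V|)\sigma=0$ on each component, so $\sigma=0$ — the degenerate case $|E|=|V|$ needing separate attention. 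Granting $\sigma=0$, antisymmetry $s_{uv}=-s_{vu}$ is immediate, and $\sum_u s_{uv}=\sum_u A_{uv}s_{uv}=-\sum_u A_{uv}s_{vu}=-\sigma_v=0$.

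It then remains to exhibit ${\bf W}$. Put $W_{uv}=A_{uv}s_{uv}$. Antisymmetry of ${\bf s}$ gives ${\bf W}=-{\bf W}^T$, and the vanishing incoming sums give ${\bf W}{\bf 1}=0$; these are exactly the conditions obtained above for an $\omega=2\pi$ solution, and taking $C(e)=|W_{uv}|$ with the sign convention of that proposition recovers $f(e,x_e)=C(e)\cos(\frac{\pi}{2}+2\pi x_e)$.

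The case $\lambda=-1$ is parallel: the coordinate equation becomes $s_{uv}-s_{vu}=-\sigma_v$, the reversed edge forces $\sigma_u=-\sigma_v$ on adjacent vertices, and the same global summation gives $\sum_v d_v\sigma_v=0$. One argues $\sigma\equiv 0$ — automatic whenever the component carries an odd cycle, since $\sigma$ must then change sign around it an odd number of times; the bipartite case again wanting separate care — and then $s_{uv}=s_{vu}$, $\sum_u s_{uv}=0$, hence ${\bf W}={\bf W}^T$ and ${\bf W}{\bf 1}=0$, which are the $\omega=\pi$ conditions. I expect the vanishing of the constant $\sigma$ to be the only genuine obstacle; everything else is bookkeeping with the incidence structure of the oriented line graph.
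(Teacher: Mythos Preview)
Your route differs substantially from the paper's. The paper does not argue forward from ${\bf T}{\bf s}=\lambda{\bf s}$ at all; it cites an earlier result establishing the \emph{converse} --- that any $s$ with $s_{uv}=\pm s_{vu}$ and $\sum_v A_{uv}s_{uv}=0$ is already a $\lambda=\pm1$ eigenvector of ${\bf T}$ --- and then appeals to an external dimension count ($|E|-|V|+1$ for $\lambda=1$, $|E|-|V|$ for $\lambda=-1$) to conclude these vectors exhaust the eigenspaces. Your argument is the direct, self-contained one: unwind the eigenvalue equation at each oriented edge to get $\sigma_v-s_{vu}=\lambda s_{uv}$, propagate to show $\sigma$ is constant (resp.\ alternating) across edges, and kill the constant by the global identity $\sum_v(d_v-2)\sigma_v=0$. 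The paper's approach is shorter once the cited ingredients are granted and avoids any case analysis; yours needs no outside input but leaves the degenerate cases open. Those degeneracies are genuine rather than cosmetic: on a bare cycle $C_n$ the $\lambda=1$ eigenspace of ${\bf T}$ is two-dimensional (constants on each of the two directed $n$-cycles comprising $\textrm{OLG}(C_n)$) and contains non-antisymmetric vectors, so the statement as written actually fails there --- and the paper's cited multiplicity $|E|-|V|+1=1$ is equally off in that case. Both proofs therefore tacitly assume the graph is beyond the unicyclic/bipartite thresholds; your explicit isolation of where the argument needs $|E|>|V|$ (resp.\ an odd cycle) is arguably more informative than the paper's reliance on imported dimension formulas.
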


\begin{proof}
In \cite{quantum} we demonstrate that $s_{uv}=A_{uv}w_{uv}$ is an
eigenvector of $\bf T$ if $\sum_v A_{uv}w_{uv}=0$ and
either $w_{uv}=-w_{vu}$ or $w_{uv}=w_{vu}$. If $w_{uv}=-w_{vu}$ the
eigenvalue is $\lambda=1$, and if $w_{uv}=w_{vu}$ then the eigenvalue is $\lambda=-1$.
These are precisely the
solutions for $\bf W$ (and for $C(e)$).
Since the eigenvectors
with $\lambda=\pm 1$ span the space of possible solutions, we obtain 
$|E|-|V|+1$ linearly independent solutions for $\lambda=1$ and $|E|-|V|$ linearly independent solutions for $\lambda=-1$  which are all the available solutions
according to \cite{wave}
\end{proof}

The structure of the eigenfunctions which are not supported on the vertices is therefore determined by the eigenvectors of the backtrackless random walk on the graph $G$.

\section{Conclusions}
We have analyzed and completely determined the eigenfunctions of the edge-based Laplacian and
given explicit forms for all of these eigenfunctions. Our analysis provides
a method of computing the eigenfunctions which are zero on the vertices
from the eigenvectors of the oriented line graph. We demonstrate the
connection between the eigenfunctions and both the classical random walk
and the backtrackless random walk. The eigensystem of edge-based Laplacian 
contains eigenfunctions which are related to both the adjacency matrix of
the line graph of $G$ and the adjacency matrix of the oriented line graph of $G$. 

As noted by Friedman and Tillich\cite{wave}, this approach is closer
to traditional analysis than the usual discrete graph Laplacian.
In particular it allows us to formulate wave equations and
relativistic heat equations which have the more usual properties
associated with these equations on a manifold (for example they
will have a finite speed of propagation). The eigensystem of the edge-based
Laplacian
may be of great use in the study of networks where distance 
and propagation speed are important. The current analysis is
currently limited to the case of uniform edge lengths. Future
work will focus on the case where the edge lengths may vary.





\bibliographystyle{plain}
\bibliography{edge}







\end{document}